\def\BibTeX{{\rm B\kern-.05em{\sc i\kern-.025em b}\kern-.08em
    T\kern-.1667em\lower.7ex\hbox{E}\kern-.125emX}}
\newtheorem{theorem}{Theorem}
\newtheorem{lemma}{Lemma}
\begin{document}
\title{Wasserstein GAN Can Perform PCA} 

 \author{%
   \IEEEauthorblockN{Jaewoong Cho and Changho Suh}
   \IEEEauthorblockA{School of Electrical Engineering\\
                     Korea Advanced Institute of Science and Technology\\
                     Email: \{cjw2525,chsuh\}@kaist.ac.kr}
 }


\maketitle

\begin{abstract}
Generative Adversarial Networks (GANs) have become a powerful framework to learn generative models that arise across a wide variety of domains. While there has been a recent surge in the development of numerous GAN architectures with distinct optimization metrics, we are still lacking in our understanding on how far away such GANs are from optimality. In this paper, we make progress on a theoretical understanding of the GANs under a simple linear-generator Gaussian-data setting where the optimal maximum-likelihood generator is known to perform Principal Component Analysis (PCA). We find that the original GAN by Goodfellow \emph{et. al.} fails to recover the optimal PCA solution. On the other hand, we show that Wasserstein GAN can approach the PCA solution in the limit of sample size, and hence it may serve as a basis for an optimal GAN architecture that yields the optimal generator for a wide range of data settings.  
\end{abstract}


\section{Introduction}
\label{section:Introduction}
The problem of learning the probability distribution of data is one of the most fundamental problems in statistics and machine learning. A generative model plays a crucial role as an underlying framework by providing a functional block (called the generator in the literature) which can create fake data which resembles the distribution  of real data. Generative Adversarial Networks (GANs)~\cite{GAN} have provided very powerful and efficient solutions for learning the generative model. The GAN framework includes two major components: Generator and Discriminator. This is inspired by a two-player game in which one player, Generator, wishes to generate fake samples that are close to real data, while the other player, Discriminator, wants to discriminate real samples against fake ones. Since the firstly introduced GAN~\cite{GAN} (that we call vanilla GAN), there has been a proliferation of GAN architectures with distinct optimization metrics including f-GAN~\cite{fGAN}, MMD-GAN~\cite{MMDGAN,MMDGAN2}, WGAN~\cite{WGAN}, Least-Squares GAN~\cite{LSGAN}, Boundary equilibrium GAN~\cite{BEGAN}, etc. One natural question that arises in this context is: Is there an optimal GAN architecture among such GANs that provides optimal solutions for learning true distributions? More specifically, is there a proper optimization metric among those employed in such GANs that yields a generated distribution which maximizes the likelihood function?

In an effort to make progress towards answering this question, we take into consideration a simple canonical setting in which the optimal solution for learning distributions is well-known and hence one can readily figure out an optimization metric (if any) that yields such a solution. The simple setting represents the case in which the data has a high-dimensional Gaussian distribution and a generator is subject to a linear operation with a Gaussian input. It has been shown in this benchmark setting that the maximum-likelihood solution performs Principal Component Analysis (PCA)~\cite{PPCA}, i.e., the covariance matrix of the generated distribution takes principal components of the true covariance matrix.  

The first finding of this work is that vanilla GAN does not recover the PCA solution under the Gaussian setting when there is no constraint in the discriminator. In the discriminator-unconstrained setting, vanilla GAN can be translated into an optimization that minimizes Jensen-Shannon (JS) divergence between the true and generated distributions. Here we find that whenever the rank of the covariance matrix of the generated Gaussian distribution is smaller than that of the true distribution (which is a typical scenario), the JS divergence attains the same value of $\log 2$ regardless of the generator~\cite{WGAN}. Hence, vanilla GAN fails to achieve the PCA solution\footnote{In practice, the discriminator has some constraints (e.g., being subject to a neural-net function). It has been studied in\cite{Arora},\cite{Farzan} that such constraint gives a positive effect in the generator design, thus leading to a reasonably good performance in practice.}.

Recently such issue on the JS divergence motivated Arjovsky \emph{et. al.}~\cite{WGAN} to propose Wasserstein GAN (WGAN) which replaces the JS divergence with the first-order Wasserstein distance that does not pose the issue. Then, one may wonder if WGAN can recover the PCA solution in the linear Gaussian setting at hand? 
The main contribution of this paper is to show that this is the case. Specifically we prove that WGAN performs PCA in the limit of sample size. To prove this, we first translate WGAN optimization into another equivalent optimization. Exploiting the key condition that the optimal solution of the translated optimization should satisfy~\cite{Ding:06}, we show that the optimal solution enables PCA. Our finding suggests that WGAN may be a good candidate for an optimal GAN architecture that yields an optimal generator for a wide variety of settings beyond the Gaussian case. 

We also investigate stability issues for two prominent neural-net-based algorithms~\cite{WGAN},~\cite{WGANGP} that intend to achieve the Nash equilibrium promised by the WGAN optimization. The WGAN optimization can be formulated as a minimax optimization which is actually a challenging problem as the convergence to a bad local optima may often occur. We show via empirical results that the two algorithms are actually practically appealing at least for the Gaussian setting. Our empirical simulation reveals that the deep-learning-based algorithms ensure the fast convergence to the Nash equilibrium with a small gap to the optimality. 

{\bf Related work: } Recently, Feizi \emph{et. al.}~\cite{UGAN} explored a natural way of specifying a loss function that leads to a unified GAN architecture. The authors have investigated a quadratic loss based on the second-order Wasserstein distance to show that the corresponding GAN architecture (which they call Quadratic GAN) can perform PCA under the linear Gaussian setting. This suggests that Quadratic GAN may also be a good candidate for an optimal architecture. However, their algorithm for implementing Quadratic GAN is tailored for the Gaussian setting, so the development of generic algorithms that span a wide spectrum of data settings has been out of reach. On the other hand, we focus on WGAN for which generic neural-net-based algorithms have been well established, and promise that WGAN may be more practically appealing towards an optimal GAN architecture. 
\begin{figure}[t]
\begin{center}
{\epsfig{figure=./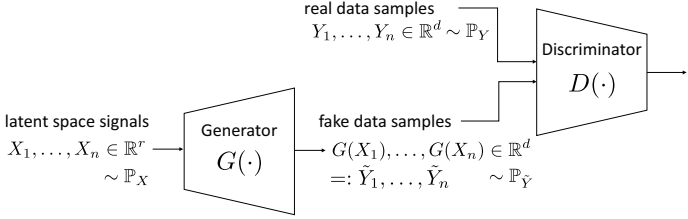, angle=0, width=0.5\textwidth}}
\end{center}
\caption{An architecture for Generative Adversarial Networks (GANs). } \label{fig:model}
\vspace*{-0.1in}
\end{figure}  

\section{GAN Architecture}
Fig. 1 illustrates an architecture for Generative Adversarial Networks (GANs). Suppose we are given the number $n$ of real data samples. We denote those by $({Y}_1, {Y}_2, \dots, {Y}_n)$ where ${Y}_i \in \mathbb{R}^d$ and $d$ indicates a dimension of each sample. Let $\mathbb{P}_{Y}$ denote the probability distribution of ${Y}_i$. Let $(X_1, X_2, \dots, X_n)$ be latent-space signals: inputs to the generator $G(\cdot)$ where $X_i \in \mathbb{R}^r$, and $r$ indicates a dimension of the latent signal, usually $r\leq d$. Let $\tilde{Y}_i:= G({X}_i) \in \mathbb{R}^d$ be the $i$th fake sample and $\mathbb{P}_{\tilde{Y}}$ be the probability distribution of $\tilde{Y}_i$.

We focus on a linear-generator Gaussian-data setting in which $Y_i$'s are i.i.d. each according to $\mathbb{P}_Y = {\cal N} ({\bf 0}, {\bf K}_Y)$; $X_i$'s are i.i.d. $\sim {\cal N}({\bf 0}, {\bf I}_r)$; and $G(\cdot)$ is a linear operator implemented by a matrix ${\bf G} \in \mathbb{R}^{d \times r}$.

The GAN architecture is inspired by a two-player game in which one player is the discriminator $D(\cdot)$ who wishes to discriminate real samples against fake ones; and the other is the generator  $G(\cdot)$ who wants to fool the discriminator. Many GAN approaches have been developed as an effort to achieve such goals, and corresponding optimization problems are formulated with different optimization metrics~\cite{GAN,WGAN,fGAN,MMDGAN,MMDGAN2,BEGAN,LSGAN}.

\section{Vanilla GAN}
\label{section:VGAN}
In an effort to achieve such goals, Goodfellow \emph{et. al.}~\cite{GAN} came up with an insightful interpretation: Viewing $D(\cdot)$ as the \emph{probability} that the input (taking either a real sample $Y_i$ or a fake one $\tilde{Y}_i$) is a real sample. This viewpoint motivated them to introduce the following minimax optimization. Given $(X_1, Y_1), \dots, (X_n,Y_n)$:
\begin{align*}
\min_{G(\cdot)} \max_{D(\cdot)} 
\frac{1}{n} \sum_{i=1}^n \log D(Y_i) + \frac{1}{n} \sum_{i=1}^n \log ( 1 - D(\tilde{Y}_i)).
\end{align*}
Notice that the discriminator intends to maximize the logarithmic of such probability $D(\cdot)$ when the input is real (or the logarithm of ``$1 - \textrm{such probability}$'' $1- D(\cdot)$ for a fake sample).

For simplicity, we consider a regime in which the sample size $n$ is large enough that the objective function in the above can be well approximated as the population limit. So the optimization can be approximated as: 
\begin{equation}
\label{equation:minmax}
\min_{G(\cdot)}\max_{D(\cdot)}\ \mathbb{E}_{{Y}}[\log D({Y})]+\mathbb{E}_{\tilde{{Y}}}[\log(1-D(\tilde{{Y}}))].
\end{equation}

In this work, we find that vanilla GAN does not recover the optimal PCA solution even for the simple Gaussian setting. To see this, we first rewrite the objective function as: 
\begin{align}
\int_{z \in {\cal Y} \cup {\cal \tilde{Y}}}\left[{\mathbb P}_{{Y}}({z})\log D({z})+{\mathbb{P}}_{{\tilde{Y}}}({z})\log(1-D({z}))\right]d{z},
\end{align}
where $\cal {Y}$ and ${\cal \tilde{Y}}$ indicate the ranges of $Y$ and $\tilde{Y}$, respectively. Here we set:
\begin{align*}
&\mathbb{P}_Y (z) dz := 0 \quad \textrm{if }z \in {\cal \tilde{Y}} \setminus {\cal Y}; \\
& \mathbb{P}_{\tilde{Y}} (z) dz := 0  \quad \textrm{if }z \in  {\cal Y} \setminus  {\cal \tilde{Y}}.
\end{align*}
Observe that for any $(a,b)\in\mathbb{R}^2\backslash\{0,0\}$ and $t \in (0,1)$, the function $a\log t+b\log(1-t)$ is maximized at $t^*=\frac{a}{a+b}$. Hence, for a fixed $G(\cdot)$, the optimal discriminator reads:
\begin{align*}
D^{*}({z}) = \frac{  \mathbb{P}_Y (z) }{ \mathbb{P}_Y (z) + \mathbb{P}_{\tilde{Y}} (z) }.
\end{align*}
Plugging this into~(\ref{equation:minmax}), we get:
\begin{align*}
&\min_{\bf G}\mathbb{E}_{{Y}}[\log D^*({Y})]+\mathbb{E}_{{\tilde{Y}}}[\log(1-D^*(\tilde{Y}))]\\
=&\min_{\bf G}\mathbb{E}_{{Y}}\left[\log \frac{\mathbb{P}_{{Y}}({Y})}{\mathbb{P}_{{\tilde{Y}}}({Y})+\mathbb{P}_{\tilde{{Y}}}({Y})}\right]+\mathbb{E}_{\tilde{Y}}\left[\log\frac{\mathbb{P}_{\tilde{{Y}}}({\tilde Y})}{\mathbb{P}_{{Y}}({\tilde Y})+\mathbb{P}_{\tilde{{Y}}}({\tilde Y})}\right]\\
=&\min_{\bf G}\ {\sf KL}\left(\mathbb{P}_{{Y}}\left\|\frac{\mathbb{P}_{{Y}}+\mathbb{P}_{\tilde{{Y}}}}{2}\right.\right)
+{\sf KL}\left(\mathbb{P}_{\tilde{{Y}}}\left\|\frac{\mathbb{P}_{{Y}}+\mathbb{P}_{\tilde{{Y}}}}{2}\right.\right)
-\log4\\
=&\min_{\bf G}\ 2\cdot {\sf JSD}\left(\mathbb{P}_{{Y}}\left\|\mathbb{P}_{\tilde{{Y}}}\right.\right)
-\log4,
\end{align*} 
where ${\sf KL}(\cdot \| \cdot)$ and ${\sf JSD} (\cdot \| \cdot)$ indicate the KL and JS divergences, respectively~\cite{CoverThomas, JSD}: For distributions $p$ and $q$, 
\begin{align*}
{\sf JSD}(p \| q):= \frac{1}{2}
\left( {\sf KL}\left(p \left\| \frac{p+q}{2}\right.\right)+{\sf KL}\left(q \left\| \frac{p+q}{2}\right.\right)\right).
\end{align*}
In the linear Gaussian setting, ${\tilde{Y}}$ is also Gaussian ${\cal N}({\bf 0}, {\bf K}_{\tilde{Y}})$ where $\mathbf{K}_{\tilde{Y} } = \mathbb{E} [ {\bf G} X ({\bf G}X)^T ] = {\bf G} {\bf G}^T$. Typically the dimension of the latent signals is smaller than than of data. So in this case, the support of $\mathbb{P}_{\tilde{Y}}$ has a strictly lower dimension than that of $\mathbb{P}_Y$, yielding:
\begin{align}
\frac{\mathbb{P}_Y (z) + \mathbb{P}_{\tilde{Y}} (z) }{2}  dz = 
\begin{cases}
\frac{\mathbb{P}_Y (z)}{2} dz \textrm{ if } z \in {\cal Y} \setminus {\cal \tilde{Y} }; \\
\frac{\mathbb{P}_{\tilde{Y}} (z) }{2} dz
\textrm{ if } z \in {\cal \tilde{Y}} \setminus {\cal Y}.
\end{cases}
\end{align}
This then gives 
$
{\sf KL} \left(\mathbb{P}_Y \left\| \frac{\mathbb{P}_Y  + \mathbb{P}_{\tilde{Y}}  }{2}\right.\right) = {\sf KL} \left(\mathbb{P}_{\tilde{Y}} \left\| \frac{\mathbb{P}_{Y}  + \mathbb{P}_{\tilde{Y}}  }{2}\right.\right) = \log 2,
$
which in turn yields:
\begin{equation}
\label{equation:JSDlog2}
{\sf JSD} (\mathbb{P}_Y \| \mathbb{P}_{\tilde{Y}} ) = \log 2,
\end{equation}
regardless of how we design $\mathbf{G}$. This implies that an optimal $\mathbf{G}^*$ does not necessarily perform PCA.    

\section{Wasserstein GAN}
\label{section:WGAN}    
The main reason that vanilla GAN fails to recover the optimal solution is that it is based on the JS divergence which poses the critical issue reflected in~(\ref{equation:JSDlog2}). As an effort to avoid such an issue, one may consider another prominent GAN architecture developed by Arjovsky \emph{et. al.}~\cite{WGAN}: Wasserstein GAN (WGAN) which employs the first-order Wasserstein distance instead of the JS divergence. The first-order Wasserstein distance is defined as: Given two distributions $\mathbb{P}_{{Y}}$ and $\mathbb{P}_{\tilde{{Y}}}$,
\begin{align}
\label{eq:W1Def}
W(\mathbb{P}_{{Y}},\mathbb{P}_{\tilde{{Y}}}) 
:= \min_{\mathbb{P}_{{Y},\tilde{{Y}}}}\mathbb{E}
\left[\|{Y}-\tilde{{Y}}\|\right],
\end{align}    
where the minimization is over all joint distributions which respect the marginals $\mathbb{P}_{{Y}}$ and $\mathbb{P}_{\tilde{{Y}}}$. Here $\| \cdot \|$ indicates the $\ell_2$ norm. Unlike  the JS divergence, it provides a meaningful non-saturating value even when dimensions of supports of two distributions are distinct. WGAN intends to solve the following optimization: 
\begin{align}
\label{equation:WGANopt}
\min_{\mathbf{G}}W(\mathbb{P}_{{Y}},\mathbb{P}_{\tilde{{Y}}}).
\end{align} 

The main contribution of this paper is that unlike vanilla GAN, WGAN recovers the optimal PCA solution under the linear Gaussian setting in the limit of sample size, formally stated below. 
\begin{theorem}[]
\label{theorem:W1PCA}
Let ${Y}\sim\mathcal{N}({\bf 0},{\bf K}_{{Y}})$ where ${\bf K}_{{Y}}$ has a full rank, i.e., ${\sf rank}({\bf K}_Y) = d$. Let ${X}\sim\mathcal{N}(\mathbf{0},\mathbf{I}_{r})$ where $r\leq d$. The optimal solution of the WGAN optimization~(\ref{equation:WGANopt}) under a linear generator is the $r$-PCA solution.
\end{theorem}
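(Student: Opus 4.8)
\emph{Proof proposal.} The plan is to fold the outer minimization over $\mathbf{G}$ into the inner coupling minimization that defines $W(\mathbb{P}_{Y},\mathbb{P}_{\tilde{Y}})$ in~(\ref{eq:W1Def}), thereby turning~(\ref{equation:WGANopt}) into a purely geometric subspace-fitting problem. A linear generator produces $\tilde{Y}\sim\mathcal{N}(\mathbf{0},\mathbf{G}\mathbf{G}^{T})$, so every fake sample lies in the $r$-dimensional column space $\mathcal{V}:=\mathrm{col}(\mathbf{G})$. Writing $\mathbf{P}_{\mathcal{V}}$ for the orthogonal projector onto $\mathcal{V}$, the fact that $\tilde{Y}\in\mathcal{V}$ gives, for \emph{any} coupling, the pointwise bound $\norm{Y-\tilde{Y}}\geq\norm{\mathbf{P}_{\mathcal{V}^{\perp}}(Y-\tilde{Y})}=\norm{\mathbf{P}_{\mathcal{V}^{\perp}}Y}$, whose right-hand side depends only on the marginal $\mathbb{P}_{Y}$ and on $\mathcal{V}$. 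Conversely, choosing the admissible $\mathbf{G}$ with $\mathbf{G}\mathbf{G}^{T}=\mathbf{P}_{\mathcal{V}}\mathbf{K}_{Y}\mathbf{P}_{\mathcal{V}}$ (rank $\leq r$, range $\subseteq\mathcal{V}$) and the deterministic coupling $\tilde{Y}=\mathbf{P}_{\mathcal{V}}Y$ attains this bound. I therefore expect to reduce~(\ref{equation:WGANopt}) to
\begin{equation*}
\min_{\mathcal{V}:\ \dim\mathcal{V}=r}\ \ex_{Y}\!\left[\norm{(\mathbf{I}_{d}-\mathbf{P}_{\mathcal{V}})Y}\right],
\end{equation*}
that is, the subspace minimizing the \emph{expected $\ell_{2}$ norm} (not its square) of the residual, which is exactly the robust/$R_1$-PCA objective.

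Parametrizing $\mathcal{V}^{\perp}$ by an orthonormal $\mathbf{V}\in\mathbb{R}^{d\times(d-r)}$, the reduced problem is $\min_{\mathbf{V}^{T}\mathbf{V}=\mathbf{I}}\ex[\norm{\mathbf{V}^{T}Y}]$ over the Stiefel manifold. I would then invoke the first-order optimality condition of~\cite{Ding:06}: differentiating the objective gives Euclidean gradient $\mathbf{R}\mathbf{V}$ with the reweighted covariance $\mathbf{R}=\ex[\,YY^{T}/\norm{\mathbf{V}^{T}Y}\,]$, and the manifold stationarity condition $(\mathbf{I}_{d}-\mathbf{V}\mathbf{V}^{T})\mathbf{R}\mathbf{V}=\mathbf{0}$ says that $\mathcal{V}^{\perp}$ is an invariant subspace of $\mathbf{R}$. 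The Gaussian structure makes this explicit: expanding $Y=\mathbf{U}\boldsymbol{\Lambda}^{1/2}W$ in the eigenbasis of $\mathbf{K}_{Y}$ with $W\sim\mathcal{N}(\mathbf{0},\mathbf{I}_{d})$, whenever $\mathcal{V}^{\perp}$ is spanned by a subset of eigenvectors the weight $\norm{\mathbf{V}^{T}Y}$ depends only on the corresponding coordinates of $W$ and is even in each, so the off-diagonal entries of $\mathbf{U}^{T}\mathbf{R}\mathbf{U}$ are odd expectations that vanish; hence $\mathbf{R}$ is diagonalized by $\mathbf{U}$ and every eigenspace of $\mathbf{K}_{Y}$ is stationary. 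Among eigenspace candidates, the objective is monotone increasing in the PSD order of $\mathrm{Cov}(\mathbf{V}^{T}Y)$ — a conditional-Jensen argument shows $\ex\norm{Z_{1}}\leq\ex\norm{Z_{2}}$ whenever $\mathrm{Cov}(Z_{1})\preceq\mathrm{Cov}(Z_{2})$, strictly so since $\mathbf{K}_{Y}$ has full rank — so an eigenvalue-swapping comparison singles out the assignment of the \emph{smallest} $d-r$ eigenvalues $\lambda_{r+1},\dots,\lambda_{d}$ to $\mathcal{V}^{\perp}$. This forces $\mathcal{V}$ to be the span of the top $r$ eigenvectors and $\mathbf{K}_{\tilde{Y}}^{*}=\mathbf{U}_{r}\boldsymbol{\Lambda}_{r}\mathbf{U}_{r}^{T}$, the $r$-PCA solution.

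The main obstacle I anticipate is upgrading stationarity to \emph{global} optimality, since $\ex[\norm{\cdot}]$ has no closed form and the Stiefel manifold carries many critical points. My strategy is to argue that, under the Gaussian rotational symmetry, every stationary subspace must in fact be an eigenspace of $\mathbf{K}_{Y}$: a subspace mixing eigenvectors with distinct eigenvalues breaks the evenness that forced $\mathbf{U}^{T}\mathbf{R}\mathbf{U}$ to be diagonal, so the induced $\mathbf{R}$ acquires off-diagonal coupling and the self-consistent invariance $\mathbf{R}\mathbf{V}=\mathbf{V}(\mathbf{V}^{T}\mathbf{R}\mathbf{V})$ fails unless the mixed eigenvalues coincide. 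Granting this classification, continuity of the objective on the compact Stiefel manifold guarantees that the global minimizer is attained at a stationary point, hence at an eigenspace, and the monotonicity/swapping comparison above identifies it uniquely (up to eigenvalue multiplicities) as the bottom-eigenvalue complement. The delicate point, which I would treat most carefully, is exactly this exclusion of mixed-eigenvector stationary subspaces.
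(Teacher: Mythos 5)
Your part (a) is exactly the paper's Lemma~\ref{lemma:W1GANeqR1PCA}: the pointwise projection bound $\|Y-\tilde{Y}\|\geq\|\mathbf{P}_{\mathcal{V}^{\perp}}Y\|$ plus the deterministic coupling $\tilde{Y}=\mathbf{P}_{\mathcal{V}}Y$ reduces~\eqref{equation:WGANopt} to the $R_1$-PCA objective, and your evenness computation showing that your $\mathbf{R}$ (the paper's $\mathbf{M}$ of~\eqref{eq:keymatrix}) is diagonalized by the eigenbasis of $\mathbf{K}_Y$, together with your Jensen comparison of the resulting eigenvalues, reproduces the paper's~\eqref{eq:1}--\eqref{eq:principalvalue}. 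The genuine gap is exactly where you flag it, and your proposed repair does not go through as sketched. From first-order Stiefel stationarity you obtain only that $\mathcal{V}^{\perp}$ is an invariant subspace of the self-consistent $\mathbf{R}$, so you must exclude stationary subspaces that mix eigenvectors of $\mathbf{K}_Y$ with distinct eigenvalues. Your argument --- ``mixing breaks the evenness, so $\mathbf{U}^{T}\mathbf{R}\mathbf{U}$ acquires off-diagonal coupling'' --- only shows that one \emph{sufficient} mechanism for invariance (diagonality of $\mathbf{R}$ in the eigenbasis) is lost; it does not show that invariance \emph{fails}, because a mixed $\mathcal{V}$ induces a different weight $\|\mathbf{V}^{T}Y\|$ and hence a different $\mathbf{R}$, whose invariant subspaces you have no control over. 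Absence of a symmetry is not a proof of non-stationarity, and since $\mathbb{E}[\|\mathbf{V}^{T}Y\|]$ has no closed form, closing this hole would require a genuinely new quantitative argument.

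The paper sidesteps the classification of critical points entirely by importing a \emph{stronger} necessary condition from~\cite{Ding:06}: the optimizer must equal the $r$ \emph{principal} eigenvectors of $\mathbf{M}$ (condition~\eqref{eq:keycondition}) --- not merely span an invariant subspace of $\mathbf{M}$ --- and, crucially,~\cite{Ding:06} shows the solution of this fixed-point equation is \emph{unique}. Given uniqueness, it suffices to verify that the single candidate $\mathbf{V}_r$ (top-$r$ eigenvectors of $\mathbf{K}_Y$) satisfies the condition, which is precisely your evenness-plus-Jensen computation; no global analysis of the Stiefel landscape is needed. Your proof therefore becomes complete if you replace ``stationarity plus exclusion of mixed stationary subspaces'' with the fixed-point condition and its uniqueness from the very reference you already cite.
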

\begin{proof}
The proof consists of two parts:
\begin{itemize}
\item[(a)] The WGAN optimization~\eqref{equation:WGANopt} is translated into another equivalent optimization which was investigated in depth in prior works~\cite{Ding:06};  
\item[(b)] We show that the $r$-PCA solution respects the unique condition (derived in~\cite{Ding:06}) that the optimal solution of the translated optimization satisfies.  
\end{itemize} 

By the definition of the Wasserstein distance~(\ref{eq:W1Def}) and the assumption of a linear generator, 
\begin{align}
\label{eq:WGANoptEq}
\min_{{\bf G}}W(\mathbb{P}_{{Y}},\mathbb{P}_{{\bf G}{\rm X}})
&=\min_{{\bf G}}\min_{\mathbb{P}_{{Y},{\bf G}{X}}}\mathbb{E}\left[\|{{Y}}-{\bf G}{{{X}}}\|\right].
\end{align}

Lemma 1 below casts the WGAN optimization~\eqref{eq:WGANoptEq} into another equivalent optimization. 
\begin{lemma}
\label{lemma:W1GANeqR1PCA}
Under a linear generator, 
\begin{align}
\label{equation:L1optimization}
\min_{{\bf G}}\min_{\mathbb{P}_{{Y},{\bf G}X}}\mathbb{E}\left[\|{{Y}}-{\bf G}{{{X}}}\|\right]
= \min_{{\bf U}^T{\bf U}= \mathbf{I}_r}\mathbb{E}\left[\|{Y}-{\bf U}{\bf U}^{T}{Y}\|\right].
\end{align} 
\end{lemma}
\begin{proof}
See Section~\ref{sec:W1GANeqR1PCA}.
\end{proof}
\emph{Remark:} Lemma~\ref{lemma:W1GANeqR1PCA} suggests that for the optimal $\mathbf{G}^*$ and $\mathbf{U}^*$, we have 
$\mathbf{G}^* X = \mathbf{U}^{*} \mathbf{U}^{*T} Y$. Considering the covariances of these quantities, we see that
\begin{align}
\label{eq:GUcondition}
\mathbf{G}^* \mathbf{G}^{*T} = 
\mathbf{U}^* \mathbf{U}^{*T} \mathbf{K}_Y \mathbf{U}^* \mathbf{U}^{*T}.
\end{align}
$\Box$

For the translated optimization~(\ref{equation:L1optimization}),~\cite{Ding:06} derived the key condition that the optimal solution $\mathbf{U}^*$ should satisfy. This condition turns to play a crucial role to prove the theorem. We first introduce a notation which serves describing the condition. Let
\begin{align}
\label{eq:keymatrix}
\mathbf{M} = \mathbb{E} 
\left[\frac{YY^T}{\|Y-{\bf U}^*{\bf U}^{*T}Y\|}\right ].
\end{align}
Now the key condition w.r.t. $\mathbf{U}^*$ says: 
\begin{align}
\label{eq:keycondition}
{\bf U}^* = r\textrm{-}{\sf PrincipalEigenvectors}(\mathbf{M}),
\end{align}
which means a concatenation of the $r$ principal eigenvectors of $\mathbf{M}$. It was also shown in~\cite{Ding:06} that $\mathbf{U}^*$ satisfying the key condition~\eqref{eq:keycondition} is \emph{unique}.
Here what we prove is that such $\mathbf{U}^*$ satisfies (see below for the proof):
\begin{align}
\label{eq:keyproof}
\mathbf{U}^* = r\textrm{-}{\sf PrincipalEigenvectors}(\mathbf{K}_Y).
\end{align} 
Applying \eqref{eq:keyproof} into \eqref{eq:GUcondition}, we can find the optimal $\mathbf{G}^*$ satisfies: 
\begin{align}
\mathbf{G}^* \mathbf{G}^{*T} = 
\mathbf{V} {\sf diag}(\sigma_1^2, \dots, \sigma_r^2, 0, \dots, 0) \mathbf{V}^T
\end{align}
where $\mathbf{V}:=[\mathbf{v}_1,\dots,\mathbf{v}_d]\in \mathbb{R}^{d \times d}$ and $\sigma_i^2$ are eigen-components of $\mathbf{K}_Y$:
\begin{align}
\label{eq:Kycov}
\mathbf{K}_Y = \mathbf{V}{\sf diag}(\sigma_1^2, \dots, \sigma_d^2) \mathbf{V}^T
=: \mathbf{V} \mathbf{\Sigma} \mathbf{V}^T. 
\end{align}
Here $\mathbf{v}_i$'s are orthonormal vectors and $\sigma_1 \geq \cdots \geq \sigma_d$.
This implies that the optimal generator $\mathbf{G}^*$ is formed by taking the $r$ principal components of $\mathbf{K}_Y$. This completes the proof. From below, we will prove the main claim~\eqref{eq:keyproof}. 

\emph{Proof of~\eqref{eq:keyproof}}: Since ${\bf V}$ is an orthonormal matrix, we get:
\begin{align*}
{Y}
= {\bf V}{\bf V}^T{Y}
= \sum_{i=1}^d Z_i\mathbf{v}_i,
\end{align*} 
where $Z_i:=\mathbf{v}_i^T{Y}$. Since ${Y}\sim{\cal N}({\bf 0},{\bf V}{\bf \Sigma}{\bf V}^T)$~\eqref{eq:Kycov},
\begin{align}
\label{eq:Zintroduce}
{Z}:={\bf V}^T{Y}=[Z_1;\cdots;Z_d]\sim\mathcal{N}(\mathbf{0},{\bf \Sigma}).
\end{align}

Let ${\bf V}_r := [\mathbf{v}_1,\dots,\mathbf{v}_r]\ (r$-principal eigenvectors of $\mathbf{K}_{Y}$). Now we intend to show that $\mathbf{V}_r = \mathbf{U}^*$. To this end, it suffices to show that $\mathbf{V}_r$ satisfies the key condition of (10) due to the uniqueness of $\mathbf{U}^*$. So, we plug ${\bf V}_r$ into~\eqref{eq:keymatrix} (by replacing $\mathbf{U}^*$), thus obtaining:
\begin{align}
&\mathbf{M}=\mathbb{E}\left[ \frac{YY^T}{\|{Y}-{\bf V}_r{\bf V}_r^T{Y}\|}\right]
= \mathbb{E}\left[ \frac{YY^T}{\sqrt{\|\sum_{i=r+1}^d Z_i\mathbf{v}_i}\|^2}\right]\nonumber\\
&\overset{(a)}{=} \mathbb{E}\left[ \frac{YY^T}{\sqrt{\sum_{i=r+1}^d Z_i^2}}\right]=
\mathbb{E}\left[
\frac{\left(\sum_{j=1}^dZ_j\mathbf{v}_j\right)
\left(\sum_{k=1}^d Z_k\mathbf{v}_k^T\right)}{\sqrt{\sum_{i=r+1}^d Z_i^2}}
\right]\nonumber\\
&=
\sum_{j=1}^d\mathbb{E}\left[
\frac{Z_j^2}{\sqrt{\sum_{i=r+1}^d Z_i^2}}\right]\mathbf{v}_j\mathbf{v}_j^T
+ 
\sum_{j\neq k, j<k}2\mathbb{E}\left[
\frac{Z_jZ_k}{\sqrt{\sum_{i=r+1}^d Z_i^2}}\right]\mathbf{v}_j\mathbf{v}_k^T\nonumber\\
&\overset{(b)}{=} 
\sum_{j=1}^d\mathbb{E}\left[
\frac{Z_j^2}{\sqrt{\sum_{i=r+1}^d Z_i^2}}\right]\mathbf{v}_j\mathbf{v}_j^T,\label{eq:1}
\end{align}
where $(a)$ follows from the orthonormality of $\mathbf{v}_i$'s and $(b)$ follows from the fact that 
\begin{align}
\label{eq:jk}
\mathbb{E}\left[
\frac{Z_jZ_k}{\sqrt{\sum_{i=r+1}^d Z_i^2}}\right]=0,
\end{align}
which we will show in the sequel. 

\emph{Proof of~\eqref{eq:jk}}: Let ${Z}^r: = \{Z_{r+1},\ldots,Z_{d}\}$. Using the tower property and the fact that ${Z}=[Z_1;\ldots;Z_d]\sim\mathcal{N}(\mathbf{0},{\bf \Sigma})$ (see~\eqref{eq:Zintroduce}), we get
\begin{align*}
&\mathbb{E}\left[
\frac{Z_jZ_k}{\sqrt{\sum_{i=r+1}^d Z_i^2}}\right]
=\mathbb{E}_{{Z}^r,Z_k}\left[\mathbb{E}_{Z_j}\left[\left.
\frac{Z_jZ_k}{\sqrt{\sum_{i=r+1}^d Z_i^2}}
\right|{Z}^r,Z_k
\right]
\right]\\
&\overset{(a)}{=}\mathbb{E}_{{Z}^r,Z_k}\left[\mathbb{E}_{Z_j}\left[\left.
\frac{-Z_jZ_k}{\sqrt{\sum_{i=r+1}^d Z_i^2}}
\right|{Z}^r,Z_k
\right]
\right]\\
&=-\mathbb{E}\left[
\frac{Z_jZ_k}{\sqrt{\sum_{i=r+1}^d Z_i^2}}\right],
\end{align*}
where $(a)$ follows from the fact that $Z_j\sim\mathcal{N}(0,\sigma_j^2)$ has a symmetric pdf and $\frac{Z_jZ_k}{\sqrt{\sum_{i=r+1}^d Z_i^2}}$ is an odd function of $Z_j$.
Therefore,
\begin{align*}
\mathbb{E}\left[
\frac{Z_jZ_k}{\sqrt{\sum_{i=r+1}^d Z_i^2}}\right]=0.
\end{align*} 
Computing $\mathbf{M} \mathbf{v}_k$ with the help of~\eqref{eq:1}, we get:
\begin{align*}
\mathbf{M} \mathbf{v}_k
\overset{}{=} 
\mathbb{E}\left[
\frac{Z_k^2}{\sqrt{\sum_{i=r+1}^d Z_i^2}}\right]\mathbf{v}_k.
\end{align*}
This implies that $\mathbf{v}_k$ is an eigenvector of $\mathbf{M}$ and the corresponding eigenvalue is
$
\mathbb{E}\left[
\frac{Z_k^2}{\sqrt{\sum_{i=r+1}^d Z_i^2}}\right].
$ 

Now it suffices to show that $(\mathbf{v}_1, \dots, \mathbf{v}_r)$ are $r$-principal eigenvectors of $\mathbf{M}$. To show this, we will demonstrate below that for $j\leq r$ and $k>r$,
\begin{align}
\label{eq:principalvalue}
\mathbb{E}\left[\frac{Z_j^2}{\sqrt{\sum_{i=r+1}^dZ_i^2}}\right]
\geq
\mathbb{E}\left[
\frac{Z_k^2}{\sqrt{\sum_{i=r+1}^d Z_i^2}}
\right].
\end{align}
Since $j \leq r$ and $Z_i$'s are independent, we have:
\begin{align*}
&\mathbb{E}\left[\frac{Z_j^2}{\sqrt{\sum_{i=r+1}^dZ_i^2}}\right]
= \mathbb{E}\left[Z_j^2\right]\mathbb{E}\left[\frac{1}{\sqrt{\sum_{i=r+1}^dZ_i^2}}\right]\\
&\overset{(a)}{\geq}
\mathbb{E}\left[Z_k^2\right]
\mathbb{E}\left[
\frac{1}{\sqrt{Z_k^2+\sum_{i>r, i\neq k}Z_i^2}}
\right]
\\
&\overset{(b)}{=}
\mathbb{E}_{Z_k}\left[Z_k^2\right]
\mathbb{E}_{{Z}^r}\left[
\mathbb{E}_{Z_k}\left[\left.
\frac{1}{\sqrt{Z_k^2+\sum_{i>r, i\neq k}Z_i^2}}
\right|
{Z}^r
\right]
\right]\\
&\overset{(c)}{\geq}
\mathbb{E}_{Z_k}\left[Z_k^2\right]
\mathbb{E}_{{Z}^r}\left[
\frac{1}{\sqrt{\mathbb{E}_{Z_k}\left[Z_k^2\right]+\sum_{i>r, i\neq k}Z_i^2}}
\right]\\
&\overset{}{=}
\mathbb{E}_{{Z}^r}\left[
\frac{\mathbb{E}_{Z_k}\left[Z_k^2\right]}{\sqrt{\mathbb{E}_{Z_k}\left[Z_k^2\right]+\sum_{i>r, i\neq k}Z_i^2}}
\right]\\
&\overset{(d)}{\geq}
\mathbb{E}_{{Z}^r}\left[\mathbb{E}_{Z_k}\left[\left.
\frac{Z_k^2}{\sqrt{Z_k^2+\sum_{i>r, i\neq k}Z_i^2}}
\right|{Z}^r
\right]
\right]\\
&\overset{}{=}
\mathbb{E}\left[
\frac{Z_k^2}{\sqrt{\sum_{i=r+1}^d Z_i^2}}
\right],
\end{align*}
where $(a)$ follows from $\mathbb{E}\left[Z_j^2\right]=\sigma_j^2\geq\sigma_k^2= \mathbb{E}\left[Z_k^2\right]\ (j<k)$; $(b)$ follows from the tower property and ${Z}\sim\mathcal{N}({\bf 0},{\bf \Sigma})$; $(c)$ follows from Jensen's inequality and the fact that $f(x) = \frac{1}{\sqrt{x+a}}$ is a convex function of $x\geq 0$; and $(d)$ follows from Jensen's inequality and the fact that $f(x) = \frac{x}{\sqrt{x+a}}$ is a concave function of $x\geq 0$. This proves~\eqref{eq:principalvalue}.

\end{proof}
\subsection{Proof of Lemma~\ref{lemma:W1GANeqR1PCA}}
\label{sec:W1GANeqR1PCA}
Let $\tilde{{Y}}={\bf G}{X}$ be a random vector whose support lies in $\mathcal{S}$ ($r$-dimensional subspace in $\mathbb{R}^d$). Then we can write ${Y}$ as:
\begin{align}
{Y} = {Y}_{{\cal S}'} + {Y}_{\cal S},
\end{align}
where ${Y}_{{\cal S}}$ indicates the projection of ${Y}$ onto $\mathcal{S}$. Using this,
\begin{align*}
&\min_{\bf G}W(\mathbb{P}_{{Y}},\mathbb{P}_{\tilde{{Y}}})
=\min_{\bf G}\min_{\mathbb{P}_{{Y},\tilde{{Y}}}}\mathbb{E}[\|{Y}-\tilde{{Y}}\|]\\
&=\min_{\bf G}\min_{\mathbb{P}_{{Y},\tilde{{Y}}}}\mathbb{E}\left[\sqrt{\|{Y}_{\mathcal{S}'}+{Y}_{\mathcal{S}}-\tilde{{Y}}\|^2}\right]\\
&\overset{(a)}{=}\min_{\bf G}\min_{\mathbb{P}_{{{Y}},\tilde{{{Y}}}}}\mathbb{E}\left[\sqrt{\|{Y}_{\mathcal{S}'}\|^2+\|{Y}_{\mathcal{S}}-\tilde{{Y}}\|^2}\right],
\end{align*}
where $(a)$ follows from the orthogonality between ${Y}_{\mathcal{S}'}$ and $({Y}_{\mathcal{S}}-\tilde{{Y}})$. 
Let $\mathbb{P}_{Y,\hat{Y}}^*$ denote a joint distribution such that $Y_{\mathcal{S}}=\hat{Y}$. Then
\begin{align*}
&\mathbb{E}_{\mathbb{P}_{Y,\hat{Y}}^*}\left[\sqrt{\|{Y}_{\mathcal{S}'}\|^2+\|{Y}_{\mathcal{S}}-\tilde{{Y}}\|^2}\right]\\
=&\mathbb{E}_{\mathbb{P}_{Y}}\left[\sqrt{\|{Y}_{\mathcal{S}'}\|^2}\right]
\geq \min_{\mathbb{P}_{{{Y}},\tilde{{{Y}}}}}\mathbb{E}\left[\sqrt{\|{Y}_{\mathcal{S}'}\|^2+\|{Y}_{\mathcal{S}}-\tilde{{Y}}\|^2}\right]. 
\end{align*}
From the fact that $\sqrt{\|Y_{\mathcal{S}'}\|^2}\leq\sqrt{\|Y_{\mathcal{S}'}\|^2+\|Y_{\mathcal{S}}-\hat{Y}\|^2}$,
\begin{align*}
&\mathbb{E}_{\mathbb{P}_{Y,\hat{Y}}^*}\left[\sqrt{\|{Y}_{\mathcal{S}'}\|^2+\|{Y}_{\mathcal{S}}-\tilde{{Y}}\|^2}\right]\\=&\mathbb{E}_{\mathbb{P}_{Y}}\left[\sqrt{\|{Y}_{\mathcal{S}'}\|^2}\right]\leq \min_{\mathbb{P}_{{{Y}},\tilde{{{Y}}}}}\mathbb{E}\left[\sqrt{\|{Y}_{\mathcal{S}'}\|^2+\|{Y}_{\mathcal{S}}-\tilde{{Y}}\|^2}\right]. 
\end{align*}
Therefore, 
\begin{align*}
&\min_{\bf G}W(\mathbb{P}_{{Y}},\mathbb{P}_{\tilde{{Y}}})\\
&\overset{}{=}\min_{\bf G}\min_{\mathbb{P}_{{{Y}},\tilde{{{Y}}}}}\mathbb{E}\left[\sqrt{\|{Y}_{\mathcal{S}'}\|^2+\|{Y}_{\mathcal{S}}-\tilde{{Y}}\|^2}\right]\\
&\overset{}{=}\min_{\bf G}\mathbb{E}_{\mathbb{P}_Y}\left[\sqrt{\|{Y}_{\mathcal{S}'}\|^2}\right]
{=}\min_{\mathcal{S}}\mathbb{E}_{\mathbb{P}_Y}\left[\|{Y}-{Y}_{\mathcal{S}}\|\right]\\
&{=}\min_{\mathbf{U}^T\mathbf{U}=\mathbf{I}_r}\mathbb{E}_{\mathbb{P}_Y}\left[\|{Y}-{\bf U}{\bf U}^T{Y}\|\right], 
\end{align*}
where $\textrm{range}(\mathbf{G}) =\textrm{range}(\mathbf{U}) $.
\begin{figure}[t]
\begin{center}
{\epsfig{figure=./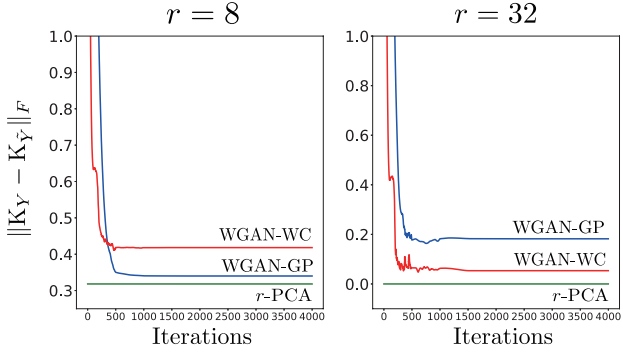, angle=0, width=0.45\textwidth}}
\end{center}
\caption{The performance of WGAN-GP and WGAN-WC for the linear Gaussian setting in which the data dimension $d=32$ and latent signal dimension $r \in \{8, 32\}$. As a performance metric, we employ the Frobenius norm between the ground-truth covariance and the generated covariance. As a baseline, we plot the performance of the $r$-PCA solution.}
\label{fig:experimental_results}
\vspace*{-0.1in}
\end{figure}

\section{Neural-net-based WGAN Algorithms}

So far we have shown that WGAN recovers the optimal PCA solution for the linear Gaussian setting. This suggests that WGAN can be a good candidate for the optimal GAN architecture.  

In practice, however, it is early to arrive at this conclusion. To implement the WGAN solution, usually we hinge upon the Kantorovich dual~\cite{OT:09} of the primal optimization, which includes two function optimizations that can be efficiently solved via neural networks with high accuracy: 
\begin{align*}
\min_{\mathbf{G}}W(\mathbb{P}_{{Y}},\mathbb{P}_{{\bf G}{X}})
=\min_{\mathbf{G}}\max_{\|D\|_L\leq 1}\mathbb{E}_{{Y}}[D({Y})]-\mathbb{E}_{{\tilde Y}}[D({\tilde Y})],
\end{align*}
where $\|D\|_L\leq 1$ means 1-Lipschitz functions $D$: For all $y_1$ and $y_2$, $
|D(y_1)-D(y_2)|\leq \|y_1-y_2\|.
$
But here an issue arises: The Kantorovich dual is of \emph{minimax} optimization where the convergence to a bad local optima frequently occurs. Hence, we also need to investigate the stability of neural-net-based algorithms that implement the minimax solution.

\begin{figure}[t]
\begin{center}
{\epsfig{figure=./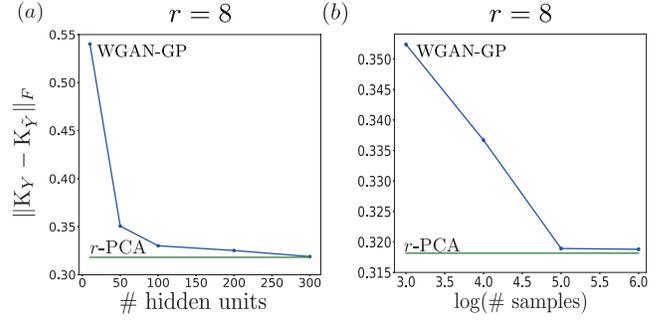, angle=0, width=0.5\textwidth}}
\end{center}
\caption{The performance of WGAN-GP for the linear Gaussian setting in which the data dimension $d=32$ and latent signal dimension $r=8$ in different network sizes with different sample sizes. The discriminator of WGAN-GP is implemented with a neural network composed of five layers in which hidden layers include the same number of hidden units: (a) the performance of WGAN-GP with the sample size $n=100,000$ in different sizes of hidden units $
n_h\in\{10,50,100,200,300\}$; (b) the performance of WGAN-GP with different sample sizes $
n\in\{1000,10000,100000,1000000\}$ for the case of $n_h=300$. We found that the gap reduces when a more complex neural network is employed together with a larger $n$.} 
\label{fig:experimental_results2}
\vspace*{-0.1in}
\end{figure}

In this work, we consider such two prominent algorithms: (1) WGAN with weight clipping (WGAN-WC); (2) WGAN with gradient penalty (WGAN-GP).

Another contribution of this work is to empirically show that the two algorithms yield good stability for the Gaussian setting, i.e., ensure the fast convergence to the Nash equilibrium with a small gap to the optimality.

An empirical verification is done for the following setting. We generate $n \in\{1000, 10000, 100000, 1000000\}$ i.i.d. samples from an $d$-dimensional Gaussian distribution ${Y}\sim\mathcal{N}(\mathbf{0},{\bf K}_{Y})$. We set $d=32$. A generation for $\mathbf{K}_Y$ takes the following procedure. First we generate $\mathbf{V}{\bf \Sigma} {\bf V}^T$ such that $[\mathbf{V}]_{ij}\sim\mathcal{N}(0,1)$ and $\sigma_i^2\sim \text{Uniform}(0,10)$. Next, we normalize $\mathbf{V}{\bf \Sigma }{\bf V}^T$ so as to have the unit Frobenius norm. The generator is constructed with a single matrix $\mathbf{G}$. The discriminator is implemented with a fully connected neural network. As a performance metric, we employ the Frobenius norm between the ground-truth covariance and the generated covariance. We use the batch size of 200. For WGAN-GP, we use Adam optimizer to train neural networks and momentum parameters $(\beta_1,\beta_2)=(0.5,0.9)$, and we set $\lambda$ parameter to be 0.1. For WGAN-WC, we use RMSProp optimizer, and we set $c$ parameter to be 0.01.  

Fig.~\ref{fig:experimental_results} shows the performance of WGAN-GP (blue-colored) and WGAN-WC (red-colored) for the values of $r\in\{8,32\}$ and $n=100,000$ with the Glorot initialization~\cite{Glorot:10}. Here the discriminator is implemented with a neural network composed of three layers, each with 64 neurons and ReLU activation functions. We use $10^{-3}$ as the initial learning rate, and decay it by a factor of 10 at the end of every 5 epochs for both algorithms. For comparison, we also plot the performance of the $r$-PCA with a green-colored line. Observe that both WGAN-GP and WGAN-WC exhibit reasonably fast convergence speed\footnote{Training of WGAN takes less than 60 seconds on a TESLA\_P40 GPU.} (converge only with a few hundred iterations) with a small gap to the optimality, reflected in the green-colored $r$-PCA solution. Here the small gap comes from a finite sample size $n$, as well as imperfect representability of a specific neural network employed herein. Actually we found that the gap reduces when a more complex neural network is employed together with a larger $n$. Here are details - also see Figs.~\ref{fig:experimental_results2} and~\ref{fig:experimental_results3}. 

\begin{figure}[t]
\begin{center}
{\epsfig{figure=./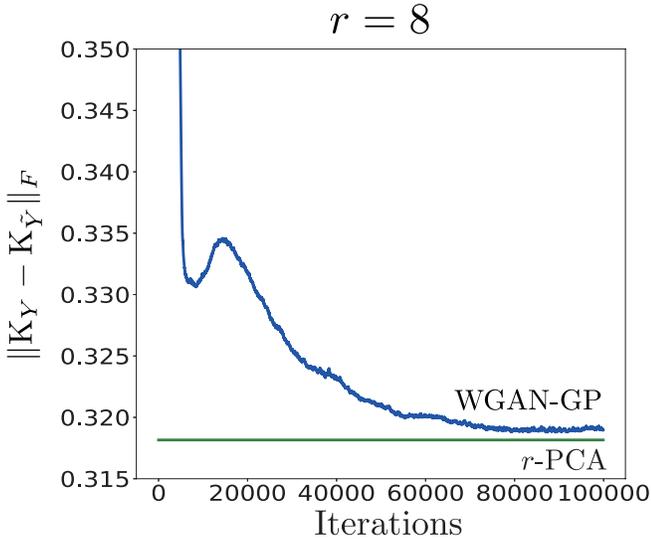, angle=0, width=0.5\textwidth}}
\end{center}
\caption{The performance of WGAN-GP for the linear Gaussian setting in which the data dimension $d=32$ and latent signal dimension $r=8$ with the sample size $n=1,000,000$. The discriminator is implemented with a neural network composed of five layers, each with 300 neurons and ReLU activation functions. We can observe a small gap in the convergence.} 
\label{fig:experimental_results3}
\vspace*{-0.1in}
\end{figure} 

 As a setting that incorporates a more complex network, we consider the case which $r=8$, $n=100,000$, and the number of hidden units for each layer $n_h\in\{10,50,100,200,300\}$; see Fig.~\ref{fig:experimental_results2}$(a)$. Here we employ WGAN-GP. The discriminator is implemented with a neural network composed of five layers in which hidden layers include the same number of hidden units and we use Adam optimizer with the learning rate $10^{-4}$. We see from Fig.~\ref{fig:experimental_results2}$(a)$ that the gap reduces with an increasing complexity, converging almost to 0. Fig.~\ref{fig:experimental_results2}($b$) demonstrates the performance of WGAN-GP with different a sample size $n\in\{1000, 10000, 100000, 1000000\}$ for the case of $n_h=300$. We also see a very small gap for a large sample size. Fig.~\ref{fig:experimental_results3} shows the performance as a function of iterations when $n_h=300$ and the sample size $n=1,000,000$. We can also see a small gap in the convergence.

\begin{figure}[t]
\begin{center}
{\epsfig{figure=./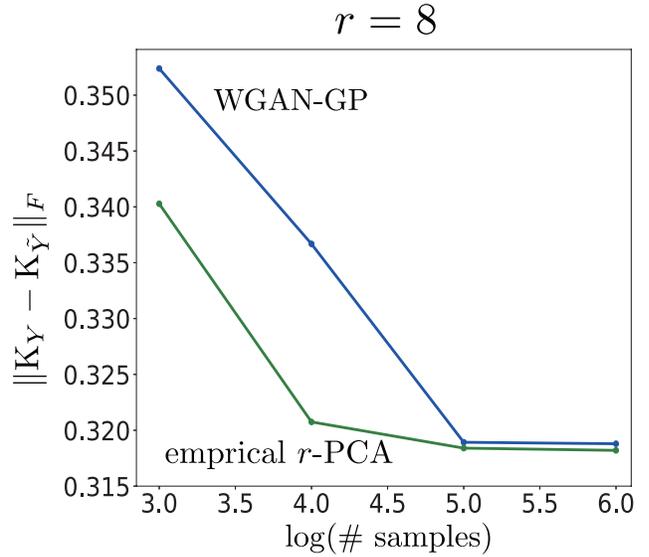, angle=0, width=0.5\textwidth}}
\end{center}
\caption{The comparison between performances of WGAN-GP and the empirical $r$-PCA solution while varying a sample size $n\in\{1000, 10000, 100000, 1000000\}$. We can observe a non-significant gap to the empirical $r$-PCA solution for each sample size and the gap vanishes with an increase in $n$.} 
\label{fig:experimental_results4}
\vspace*{-0.1in}
\end{figure}

\section{Discussion}
Our theoretical result focuses on the population limit case. We also explore via some experiments a more practically-relevant setting in which the sample size $n$ is finite and the optimal ML solution is the \emph{empirical} $r$-PCA. Specifically we compare WGAN-GP with the emprical $r$-PCA while varying a sample size $n\in\{1000, 10000, 100000, 1000000\}$. For the discriminator implementation in WGAN-GP, we employ a 5-layer neural network with 300 hidden units at each layer. See Fig.~\ref{fig:experimental_results4}. Observe that the WGAN-GP algorithm converges to the equilibrium with a non-significant gap to the emprical $r$-PCA performance, and the gap vanishes with an increase in $n$. From simulation results, we conjecture that WGAN optimization may achieve the empirical PCA solution for a finite sample-sized setting. The proof of the conjecture requires verification that the generator design w.r.t. the \emph{empirical} distribution of real data samples yields the empirical PCA solution. If that is not the case (the conjecture is disproved), then there may be a proper constraint for the discriminator that leads to the empirical PCA solution, as in the case of quadratic GAN~\cite{UGAN} where the 2nd-order Wasserstein distance together with a \emph{quadratic}-function constraint for the discriminator yields the empirical PCA solution.

\section{Conclusion}
We showed that WGAN can recover the optimal PCA solution under the linear-generator Gaussian-data setting. This promises that WGAN may form the basis of an optimal GAN architecture. Our future work includes: (1) Discovering a class of GAN architectures that performs PCA under the Gaussian setting (if any); (2) Exploring other settings in which other GAN architectures provide optimal solutions (if any); (3) Developing novel algorithms which can resolve stability issues (if required).

\section{Acknowledgment}
This work was supported by the National Research Foundation of Korea(NRF) grant funded by the Korea government(MSIT) (No. 2018R1A1A1A05022889).

%


\bibliographystyle{ieeetr}
\bibliography{bib_wgan}

\end{document}